\begin{document}
\title{A Further Improvement on Approximating TTP-2}
%
%
\author{Jingyang Zhao \and Mingyu Xiao\orcidID{0000-0002-1012-2373}}
\authorrunning{J. Zhao and M. Xiao}
%
\institute{University of Electronic Science and Technology of China, Chengdu, China\\
\email{1176033045@qq.com, myxiao@gmail.com}}
\maketitle              
\begin{abstract}
The Traveling Tournament Problem (TTP) is a hard but interesting sports scheduling problem inspired by Major League Baseball, which is to design a double round-robin schedule such that each pair of teams plays one game in each other's home venue, minimizing the total distance traveled by all $n$ teams ($n$ is even). In this paper, we consider TTP-2, i.e., TTP with one more constraint that each team can have at most two consecutive home games or away games. Due to the different structural properties, known algorithms for TTP-2 are different for $n/2$ being odd and even.
For odd $n/2$, the best known approximation ratio is about $(1+12/n)$, and for even $n/2$, the best known approximation ratio is about $(1+4/n)$.
In this paper, we further improve the approximation ratio from $(1+4/n)$ to $(1+3/n)$ for $n/2$ being even.
Experimental results on benchmark sets show that our algorithm can improve previous results on all instances with even $n/2$ by $1\%$ to $4\%$.

\keywords{Sports scheduling \and Traveling Tournament Problem \and Approximation Algorithms \and Timetabling Combinatorial Optimization}
\end{abstract}
\section{Introduction}
The Traveling Tournament Problem (TTP), first systematically introduced in~\cite{easton2001traveling}, is a hard but interesting sports scheduling problem inspired by Major League Baseball.
This problem is to find a double round-robin tournament satisfying several constraints that minimizes the total distances traveled by all participant teams.
There are $n$ participating teams in the tournament, where $n$ is always even. Each team should play $2(n-1)$ games in $2(n-1)$ consecutive days. Since each team can only play one game on each day, there are exact $n/2$ games scheduled on each day.
There are exact two games between any pair of teams,
where one game is held at the home venue of one team and the other one is held at the home venue of the other team.
The two games between the same pair of teams could not be scheduled in two consecutive days.
These are the constraints for TTP. We can see that it is not easy to construct a feasible schedule.
Now we need to find an optimal schedule that minimizes the total traveling distances by all the $n$ teams.
A well-known variant of TTP is TTP-$k$. which has one more constraint:
each team is allowed to take at most $k$ consecutive home or away games.
If $k$ is very large, say $k=n-1$, then this constraint will lose its meaning and it becomes TTP again. For this case, a team can schedule its travel distance as short as the traveling salesmen problem. On the other hand,
in a sports schedule, it is generally believed that home stands and road trips should alternate as regularly as possible for each team~\cite{campbell1976minimum,thielen2012approximation}.
The smaller the value of $k$, the more frequently teams have to return their homes.
TTP and its variants have been extensively studied in the literature~\cite{kendall2010scheduling,rasmussen2008round,thielen2012approximation,DBLP:conf/mfcs/XiaoK16}.

\subsection{Related Work}
In this paper, we will focus on TTP-2. We mainly survey the results on TTP-$k$.
For $k=1$, TTP-1 is trivial and there is no feasible schedule~\cite{de1988some}.
But when $k\geq 2$, the problem suddenly becomes very hard. It is not easy to find a simple feasible schedule. Even no good
brute force algorithm with a single exponential running time has been found yet.
In the online benchmark \cite{trick2007challenge}, most instances with more than $10$ teams are still unsolved completely even by using high-performance machines.
The NP-hardness of
TTP-$k$ with $k=3$ or $k=n-1$ has been proved \cite{bhattacharyya2016complexity,thielen2011complexity}.
Although the hardness of other cases has not been theoretically proved, most people believe TTP-$k$ with $k\geq 2$ is very hard.
In the literature, there is a large number of contributions on approximation algorithms~\cite{yamaguchi2009improved,imahori2010approximation,miyashiro2012approximation,westphal2014,hoshino2013approximation,thielen2012approximation,DBLP:conf/mfcs/XiaoK16} and heuristic algorithms~\cite{easton2003solving,lim2006simulated,anagnostopoulos2006simulated,di2007composite,goerigk2014solving}.

In terms of approximation algorithms, most results are based on the assumption that the distance holds the symmetry and triangle inequality properties. This is natural and practical in the sports schedule.
For TTP or TTP-$k$ with $k\geq n-1$, Westphal and Noparlik \cite{westphal2014} proved an approximation ratio of 5.875 and Imahori\emph{ et al.} \cite{imahori2010approximation} proved an approximation ratio of 2.75 at the same time.
For TTP-3, the current approximation ratio is $5/3+O(1/n)$~\cite{yamaguchi2009improved}.
The first record of TTP-2 seems from the schedule of a basketball conference of ten teams
in~\cite{campbell1976minimum}. This paper did not discuss the approximation ratio.
In fact, any feasible schedule for TTP-2
is a 2-approximation solution~\cite{thielen2012approximation}.
Although any feasible schedule will not have a very bad performance, no simple construction of feasible schedules is known now.
In the literature, all known algorithms for TTP-2 are different for $n/2$ being even and odd. This may be caused by different structural properties. One significant contribution to TTP-2 was done by Thielen and Westphal~\cite{thielen2012approximation}.
They proposed a $(3/2+O(1/n))$-approximation algorithm for $n/2$ being odd and a $(1+16/n)$-approximation algorithm for $n/2$ being even.
Now the approximation ratio was improved to $(1+\frac{12}{n}+\frac{8}{n(n-2)})$ for odd $n/2$~\cite{DBLP:conf/ijcai/ZhaoX21} and to $(1+\frac{4}{n}+\frac{4}{n(n-2)})$ for even $n/2$~\cite{DBLP:conf/mfcs/XiaoK16}.

\subsection{Our Results}
In this paper, we design an effective algorithm for TTP-2 with $n/2$ being even with an approximation ratio $(1+\frac{3}{n}-\frac{6}{n(n-2)})$, improving the ratio from $(1+\frac{4}{n}+\Theta(\frac{1}{n(n-2)}))$ to
$(1+\frac{3}{n}-\Theta(\frac{1}{n(n-2)}))$. Now the ratio is small and improvement becomes harder and harder.
Our major algorithm is based on packing minimum perfect matching. We first find a minimum perfect matching in the distance graph, then pair the teams according to the matching, and finally construct a feasible schedule based on the paired teams (called super-teams).
Our algorithm is also easy to implement and runs fast.
Experiments show that our results beat all previously-known solutions on the 17 tested instances in~\cite{DBLP:conf/mfcs/XiaoK16} with an average improvement of $2.10\%$.

\section{Preliminaries}\label{sec_pre}
We will always use $n$ to denote the number of teams and let $m=n/2$, where $n$ is an even number.
We also use $\{t_1, t_2, \dots, t_n\}$ to denote the set of the $n$ teams.
A sports scheduling on $n$ teams is \emph{feasible} if it holds the following properties.
\begin{itemize}
\item \emph{Fixed-game-value}: Each team plays two games with each of the other $n-1$ teams, one at its home venue and one at its opponent's home venue.
\item \emph{Fixed-game-time}: All the games are scheduled in $2(n-1)$ consecutive days and each team plays exactly one game in each of the $2(n-1)$ days. 
\item \emph{Direct-traveling}: All teams are initially at home before any game begins, all teams will come back home after all games, and a team travels directly from its game venue in the $i$th day to its game venue in the $(i+1)$th day.
\item \emph{No-repeat}: No two teams play against each other on two consecutive days.
\item \emph{Bounded-by-$k$}: The number of consecutive home/away games for any team is at most $k$.
\end{itemize}

The TTP-$k$ problem is to find a feasible schedule minimizing the total traveling distance of all the $n$ teams.
The input of TTP-$k$ contains an $n \times n$ distance matrix $D$ that indicates the distance between each pair of teams.
The distance from the home of team $i$ to the home of team $j$ is denoted by $D_{i,j}$.
We also assume that $D$ satisfies the symmetry and triangle inequality properties, i.e., $D_{i,j}=D_{j,i}$ and $D_{i,j} \leq D_{i,h} + D_{h,j}$ for all $i,j,h$. We also let $D_{i,i}=0$ for each $i$.

We will use $G$ to denote an edge-weighted complete graph on $n$ vertices representing the $n$ teams.
The weight of the edge between two vertices $t_i$ and $t_j$ is $D_{i,j}$, the distance from the home of $t_i$ to the home of $t_j$.
We also use $D_i$ to denote the weight sum of all edges incident on $t_i$ in $G$, i.e., $D_i=\sum_{j=1}^n D_{i,j}$.
The sum of all edge weights of $G$ is denoted by $D_G$.

We let $M$ denote a minimum weight perfect matching in $G$. The weight sum of all edges in $M$ is denoted by $D_M$.
We may consider the endpoint pair of each edge in $M$ as a \emph{super-team}. We use $H$ to denote the complete graph on the $m$ vertices representing the $m$ super-teams. The weight of the edge between two super-teams $u_i$ and $u_j$, denoted by $D(u_i,u_j)$, is the sum of the weight of the four edges in $G$ between one team in $u_i$ and one team in $u_j$, i.e., $D(u_i, u_j)=\sum_{t_{i'}\in u_i \& t_{j'}\in u_j}D_{i',j'}$. We also let $D(u_i,u_i)=0$ for any $i$.
We give an illustration of the graphs $G$ and $H$ in Figure~\ref{fig:fig001}.
\begin{figure}[ht]
    \centering
    \includegraphics[scale=0.6]{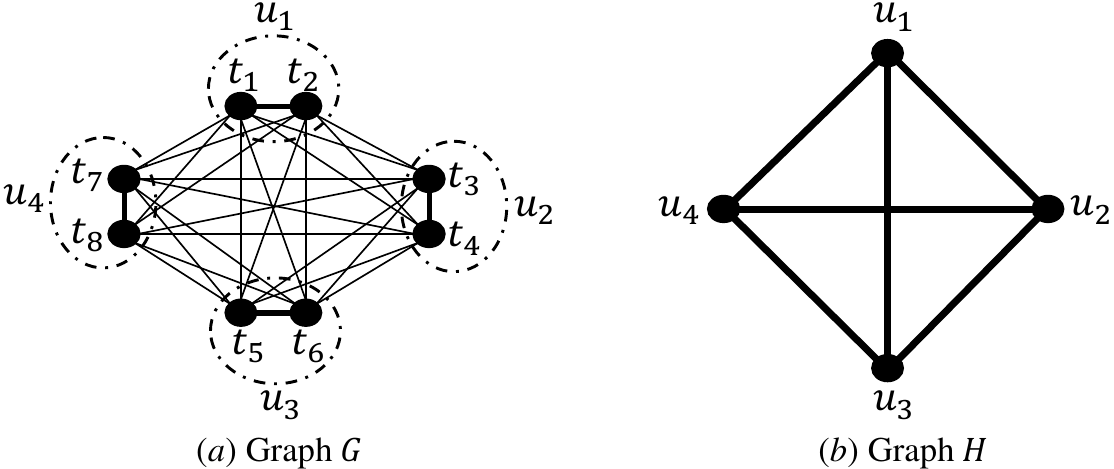}
    \caption{An illustration of graphs $G$ and $H$, where there four dark lines form a minimum perfect matching $M$ in $G$}
    \label{fig:fig001}
\end{figure}

The sum of all edge weights of $H$ is denoted by $D_H$. It holds that
\begin{eqnarray} \label{eqn_GH}
D_H=D_G-D_M.
\end{eqnarray}

\subsection{Independent lower bound and extra cost}
The \emph{independent lower bound} for TTP-2 was firstly introduced by Campbell and Chen~\cite{campbell1976minimum}.
It has become a frequently used lower bound.
The basic idea of the independent lower bound is to obtain a lower bound $LB_i$ on the traveling distance of a single team $t_i$ independently without considering the feasibility of other teams.

The road of a team $t_i$ in TTP-$2$, starting at its home venue and coming back home after all games, is called
an \emph{itinerary} of the team. The itinerary of $t_i$ is also regarded as a graph on the $n$ teams,
which is called the \emph{itinerary graph} of $t_i$.
In an itinerary graph of $t_i$, the degree of all vertices except $t_i$ is 2 and the degree of $t_i$ is greater than or equal to $n$ since team $t_i$ will visit each other team venue only once.
Furthermore, for any other team $t_j$, there is at least one edge between $t_i$ and $t_j$, because $t_i$ can only visit at most 2 teams on each road trip and then team $t_i$ either comes from its home to team $t_j$ or goes back to its home after visiting team $t_j$. We decompose the itinerary graph of $t_i$ into two parts: one is a spanning star centered at $t_i$ (a spanning tree which only vertex $t_i$ of degree $> 1$) and the forest of the remaining part. Note that in the forest, only $t_i$ may be a vertex of degree $\geq 2$ and all other vertices are degree-1 vertices. See Figure~\ref{fig:fig002} for illustrations of the itinerary graphs.

\begin{figure}[ht]
    \centering
    \includegraphics[scale=0.85]{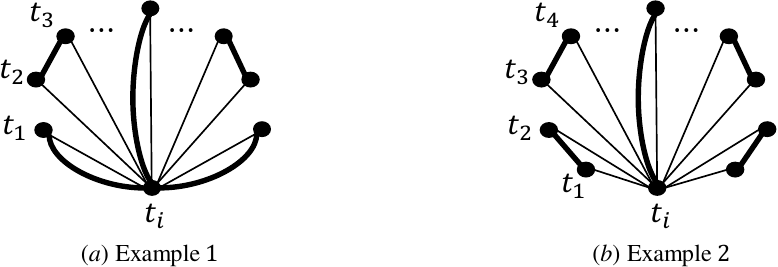}
    \caption{The itinerary graph of $t_i$, where the light edges form a spanning star and the dark edges form the remaining forest. In the right example (b), the remaining forest is a perfect matching of $G$}
    \label{fig:fig002}
 \end{figure}

For different itineraries of $t_i$, the spanning star is fixed and only the remaining forest may be different.
The total distance of the spanning star is $\sum_{j\neq i} D_{i,j}=D_i$. Next, we show an upper and lower bound on the total distance of the remaining forest. For each edge between two vertices $t_{j_1}$ and $t_{j_2}$ ($j_1,j_2\neq i$), we have that $D_{j_1,j_2}\leq D_{i,j_1}+D_{i,j_2}$ by the triangle inequality property. Thus, we know that the total distance of the remaining forest is at most the total distance of the spanning star. Therefore, the distance of any feasible itinerary of $t_i$ is at most $2D_i$. This also implies that any feasible solution to TTP-2 is a 2-approximation solution.
 On the other hand, the distance of the remaining forest is at least as that of a minimum perfect matching of $G$ by the triangle inequality.
Recall that we use $M$ to denote a minimum perfect matching of $G$. Thus, we have
a lower bound $LB_i$ for each team $t_i$:
\begin{eqnarray} \label{eqn_lower1}
LB_i=D_i+D_M.
\end{eqnarray}

The itinerary of $t_i$ to achieve $LB_i$ is called the \emph{optimal itinerary}.
The \emph{independent lower bound} for TTP-2 is the traveling distance such that all teams reach their optimal itineraries, which is denoted as
\begin{eqnarray} \label{eqn_lowerbound}
 LB=\sum_{i=1}^n LB_i =\sum_{i=1}^n (D_i +D_M)=2D_G+nD_M.
\end{eqnarray}

For any team, it is possible to reach its optimal itinerary. However, it is impossible for all teams to reach their optimal itineraries synchronously in a feasible schedule~\cite{thielen2012approximation}, even for $n=4$. So the independent lower bound for TTP-2 is not achievable.

To analyze the quality of a schedule of the tournament, we will compare the itinerary of each team with the optimal itinerary.
The different distance is called the \emph{extra cost}. Sometimes it is not convenient to compare the whole itinerary directly.
We may consider the extra cost for a subpart of the itinerary.
We may split an itinerary into several trips and each time we compare some trips.
A \emph{road trip} in an itinerary of team $t_i$ is a simple cycle starting and ending at $t_i$.
So an itinerary consists of several road trips. For TTP-2, each road trip is a triangle or a cycle on two vertices.
Let $L$ and $L'$ be two itineraries of team $t_i$, $L_s$ be a sub itinerary of $L$ consisting of several road trips in $L$, and
$L'_s$ be a sub itinerary of $L'$ consisting of several road trips in $L'$.
We say that the sub itineraries $L_s$ and $L'_s$ are \emph{coincident} if they visit the same set of teams.
We will only compare a sub itinerary of our schedule with a coincident sub itinerary of the optimal itinerary and consider the extra cost between them.

\section{Constructing the Schedule}
We will introduce a method to construct a feasible tournament first.
Our construction consists of two parts. First, we arrange \emph{super-games} between \emph{super-teams}, where each super-team contains
a pair of normal teams. Then we extend super-games to normal games between normal teams.
To make the itinerary as similar as the optimal itinerary, we take each team pair in the minimum perfect matching $M$ of $G$ as a \emph{super-team}. There are $n$ normal teams and then there are $m=n/2$ super-teams. We denote the set of super-teams as $\{u_1, u_2, \dots, u_{m}\}$ and relabel the $n$ teams such that $u_i=\{t_{2i-1},t_{2i}\}$ for each $i$.

Each super-team will attend $m-1$ super-games in $m-1$ time slots.
Each super-game on the first $m-2$ time slots will be extended to eight normal games between normal teams on four days, and each super-game on the last time slot will be extended to twelve normal games between normal teams on six days. So each normal team $t_i$ will attend $4\times (m-2)+6=4m-2=2n-2$ games.
This is the number of games each team $t_i$ should attend in TTP-2.
In our algorithm, the case of $n=4$ is easy, and hence we assume here that $n\geq 8$.

We construct the schedule for super-teams from the first time slot to the last time slot $m-1$.
In each of the $m-1$ time slots, we have $\frac{m}{2}$ super-games.
In fact, our schedules in the first time slot and in the last time slot are different from the schedules in the middle time slots.

For the first time slot, the $\frac{m}{2}$ super-games are arranged as shown in Figure~\ref{fig:figa}. All of these super-games are called \emph{normal super-games}. Each super-game is represented by a directed edge, the information of which will be used to extend super-games to normal games between normal teams.

\begin{figure}[ht]
    \centering
    \includegraphics[scale=0.45]{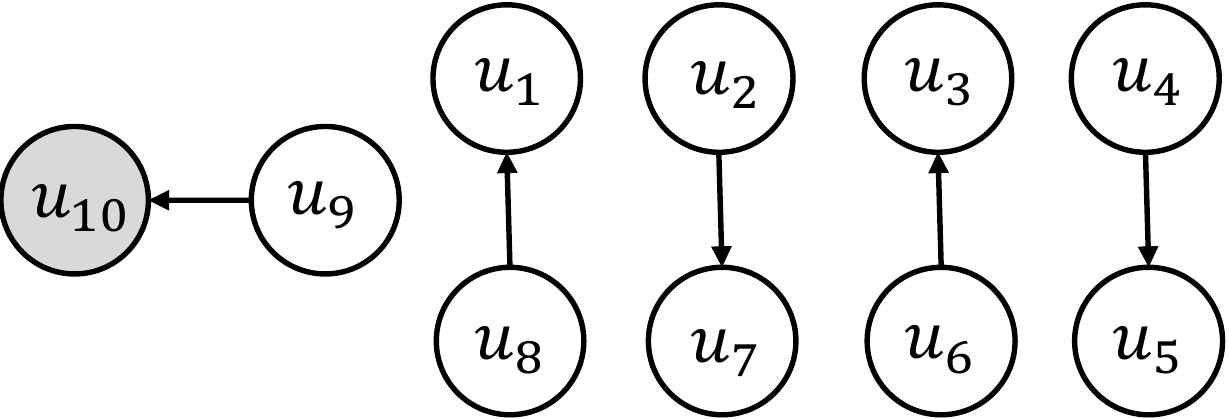}
    \caption{The super-game schedule on the first time slot for an instance with $m=10$}
    \label{fig:figa}
\end{figure}

In Figure~\ref{fig:figa}, the last super-team $u_m$ is denoted as a dark node, and all other super-teams
$u_1, \dots, u_{m-1}$ are denoted as white nodes.
The while nodes form a cycle and we may change the positions of the while nodes according to the cycle in the other time slots.
In the second time slot, we keep the position of $u_m$ and change the positions of white super-teams in the cycle by moving one position in the clockwise direction, and also change the direction of each edge except for the most left edge incident on $u_m$. This edge will be replaced by a double arrow edge. The super-game including $u_m$ is also called a \emph{left super-game} in the middle $m-3$ time slots. So in the second time slot, there are $\frac{m}{2}-1$ normal super-games and one left super-games.
 An illustration of the schedule in the second time slot is shown in Figure~\ref{fig:figb}.

 \begin{figure}[ht]
    \centering
    \includegraphics[scale=0.45]{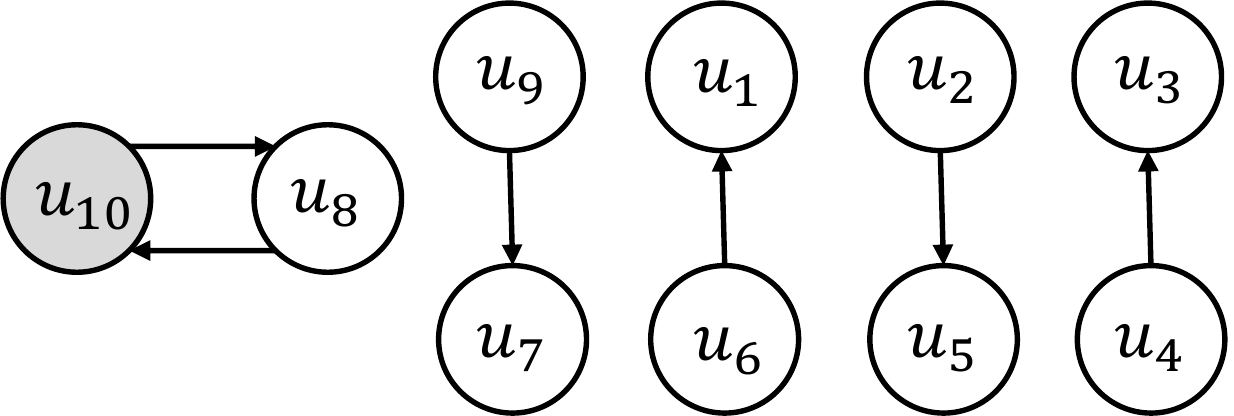}
    \caption{The super-game schedule on the second time slot for an instance with $m=10$}
    \label{fig:figb}
\end{figure}

In the third time slot, there are also $\frac{m}{2}-1$ normal super-games and one left super-games.
We also change the positions of white super-teams in the cycle by moving one position in the clockwise direction while the direction of each edge is reversed. The position of the dark node will always keep the same.
An illustration of the schedule in the third time slot is shown in Figure~\ref{fig:figc}.

\begin{figure}[ht]
    \centering
    \includegraphics[scale=0.45]{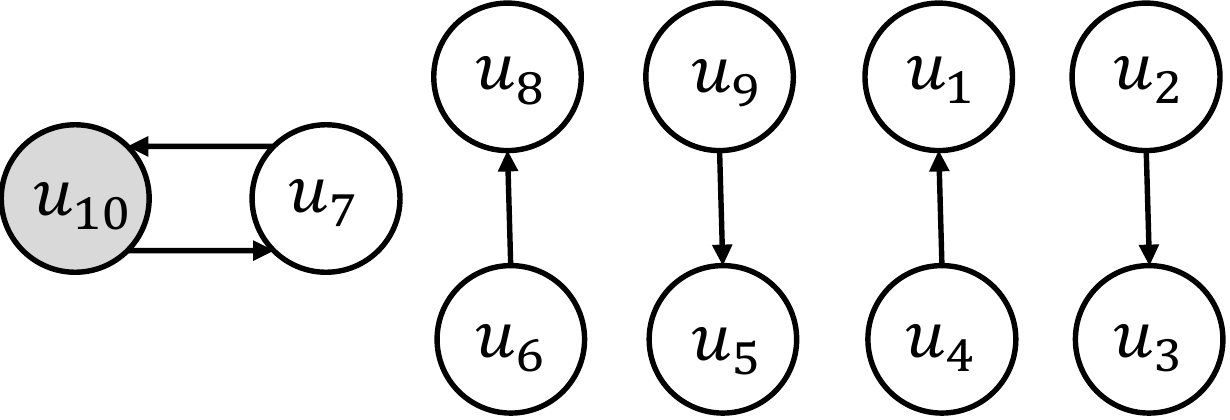}
    \caption{The super-game schedule on the third time slot for an instance with $m=10$}
    \label{fig:figc}
\end{figure}

The schedules for the other middle slots are derived analogously. Before we introduce the super-games in the last time slot $m-1$,
 we first explain how to extend the super-games in the first $m-2$ time slots to normal games.
 In these time slots, we have two different kinds of super-games: normal super-games and left super-games. We first consider normal super-games.

\textbf{Case~1. Normal super-games}:
Each normal super-game will be extended to eight normal games on four days.
Assume that in a normal super-game, super-team $u_{i}$ plays against the super-team $u_{j}$ on time slot $q$ ($1\leq i,j\leq m$ and $1\leq q\leq m-2$). Recall that $u_{i}$ represents normal teams \{$t_{2i-1}, t_{2i}$\} and $u_{j}$ represents normal teams \{$t_{2j-1}, t_{2j}$\}. The super-game will be extended to eight normal games on four corresponding days from $4q-3$ to $4q$, as shown in Figure~\ref{fig:fig003}. A directed edge from team $t_{i'}$ to team $t_{i''}$ means $t_{i'}$ plays against $t_{i''}$ at the home venue of $t_{i''}$.
Note that if there is a directed edge from $u_j$ to $u_i$, then the direction of all the edges in Figure~\ref{fig:fig003} should be reversed.

\begin{figure}[ht]
    \centering
    \includegraphics[scale=0.75]{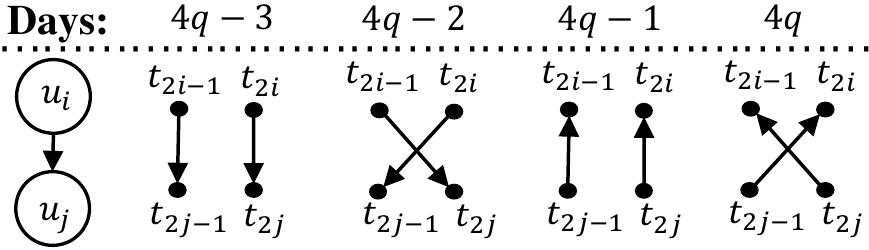}
    \caption{Extending normal super-games}
    \label{fig:fig003}
\end{figure}

\textbf{Case~2. Left super-games}:
Assume that in a left super-game, super-team $u_{m}$ plays against super-team $u_{i}$ on time slot $q$ ($2\leq i\leq m-2$ and $2\leq q\leq m-2$). Recall that $u_{m}$ represents normal teams \{$t_{2m-1}, t_{2m}$\} and $u_{i}$ represents normal teams \{$t_{2i-1}, t_{2i}$\}. The super-game will be extended to eight normal games on four corresponding days from $4q-3$ to $4q$, as shown in Figure~\ref{fig:fig004} for even time slot $q$. For odd time slot $q$, the direction of edges in the figure will be reversed.

\begin{figure}[ht]
    \centering
    \includegraphics[scale=0.8]{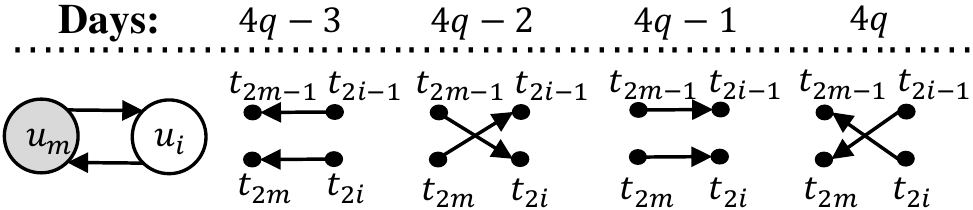}
    \caption{Extending left super-games}
    \label{fig:fig004}
\end{figure}

The first $m-2$ time slots will be extended to $4(m-2)=2n-8$ days according to the above rules. Each normal team will have six remaining games,
which will be corresponding to the super-games on the last time slot.
We will call a super-game on the last time slot a \emph{last super-game}.
Figure~\ref{fig:fig005} shows an example of the schedule on the last time slot.

\begin{figure}[ht]
    \centering
    \includegraphics[scale=0.45]{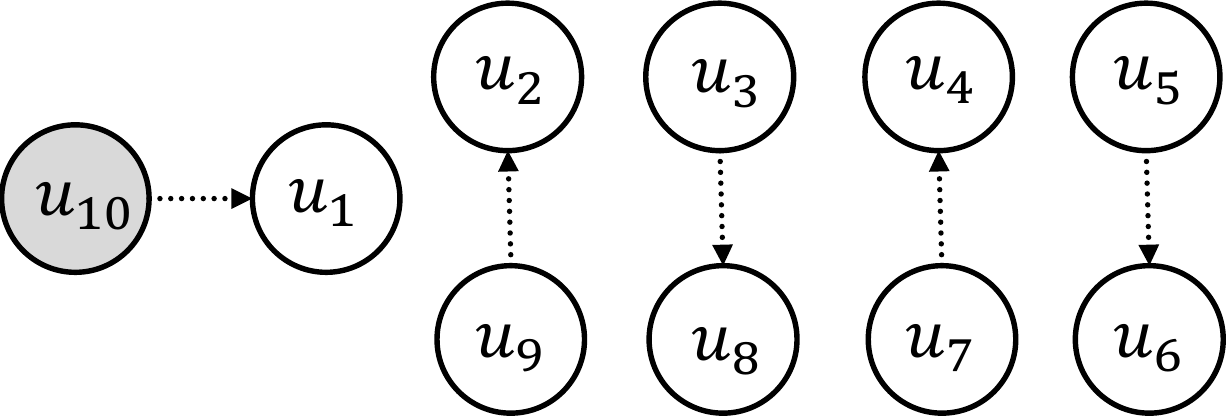}
    \caption{The super-game schedule on the last time slot for an instance with $m=10$}
    \label{fig:fig005}
\end{figure}

\textbf{Case~3. Last super-games}:
Next, we extend a last super-game into twelve normal games on six days.
Assume that on the last time slot $q=m-1$, super-team $u_i$ plays against super-team $u_j$ ($1\leq i,j\leq m$). Recall that $u_{i}$ represents normal teams \{$t_{2i-1}, t_{2i}$\} and $u_{j}$ represents normal teams \{$t_{2j-1}, t_{2j}$\}. The last super-game will be extended to twelve normal games on six corresponding days from $4q-3$ to $4q+2$, as shown in Figure~\ref{fig:fig006}.

\begin{figure}[ht]
    \centering
    \includegraphics[scale=0.75]{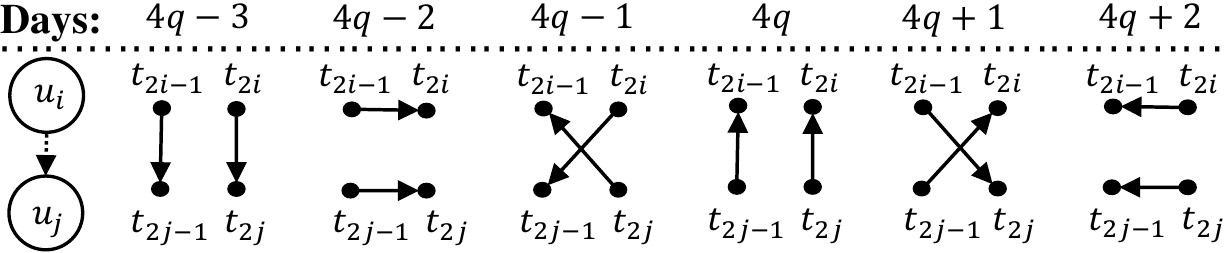}
    \caption{Extending last super-games}
    \label{fig:fig006}
\end{figure}

The above is the main part of the schedule. Now, we give an example of the schedule for $n=8$ teams constructed by using the above rules. In Table~\ref{ansexample}, the $i$th row indicates team $t_i$, the $j$th column indicates the $j$th day in the double round-robin,
 item $+t_{x}$ on the $i$th row and $j$th column means that team $t_i$ plays against team $t_{x}$ on the $j$th day at the home venue of the opposite team $t_{x}$, and item $-t_{x}$ on the $i$th row and $j$th column means that team $t_i$ plays against team $t_{x}$ on the $j$th day at its home venue.

\begin{table}[ht]
\centering
\begin{tabular}{ m{0.3cm}<{\centering}|*{14}{m{0.70cm}<{\centering}} }
   & 1 & 2 & 3 & 4 & 5 & 6 & 7 & 8 & 9 & 10 & 11 & 12 & 13 & 14\\
   \hline
  $t_{1}$ & $-t_{3}$ & $-t_{4}$ & $+t_{3}$ & $+t_{4}$ & $-t_{5}$ & $-t_{6}$ & $+t_{5}$ & $+t_{6}$ &
            $-t_{7}$ & $+t_{2}$ & $-t_{8}$ & $+t_{7}$ & $+t_{8}$ & $-t_{2}$ \\
  $t_{2}$ & $-t_{4}$ & $-t_{3}$ & $+t_{4}$ & $+t_{3}$ & $-t_{6}$ & $-t_{5}$ & $+t_{6}$ & $+t_{5}$ &
            $-t_{8}$ & $-t_{1}$ & $+t_{7}$ & $+t_{8}$ & $-t_{7}$ & $+t_{1}$ \\
  $t_{3}$ & $+t_{1}$ & $+t_{2}$ & $-t_{1}$ & $-t_{2}$ & $+t_{7}$ & $-t_{8}$ & $-t_{7}$ & $+t_{8}$ &
            $-t_{5}$ & $+t_{4}$ & $-t_{6}$ & $+t_{5}$ & $+t_{6}$ & $-t_{4}$ \\
  $t_{4}$ & $+t_{2}$ & $+t_{1}$ & $-t_{2}$ & $-t_{1}$ & $+t_{8}$ & $-t_{7}$ & $-t_{8}$ & $+t_{7}$ &
            $-t_{6}$ & $-t_{3}$ & $+t_{5}$ & $+t_{6}$ & $-t_{5}$ & $+t_{3}$ \\
  $t_{5}$ & $+t_{7}$ & $+t_{8}$ & $-t_{7}$ & $-t_{8}$ & $+t_{1}$ & $+t_{2}$ & $-t_{1}$ & $-t_{2}$ &
            $+t_{3}$ & $+t_{6}$ & $-t_{4}$ & $-t_{3}$ & $+t_{4}$ & $-t_{6}$ \\
  $t_{6}$ & $+t_{8}$ & $+t_{7}$ & $-t_{8}$ & $-t_{7}$ & $+t_{2}$ & $+t_{1}$ & $-t_{2}$ & $-t_{1}$ &
            $+t_{4}$ & $-t_{5}$ & $+t_{3}$ & $-t_{4}$ & $-t_{3}$ & $+t_{5}$ \\
  $t_{7}$ & $-t_{5}$ & $-t_{6}$ & $+t_{5}$ & $+t_{6}$ & $-t_{3}$ & $+t_{4}$ & $+t_{3}$ & $-t_{4}$ &
            $+t_{1}$ & $+t_{8}$ & $-t_{2}$ & $-t_{1}$ & $+t_{2}$ & $-t_{8}$ \\
  $t_{8}$ & $-t_{6}$ & $-t_{5}$ & $+t_{6}$ & $+t_{5}$ & $-t_{4}$ & $+t_{3}$ & $+t_{4}$ & $-t_{3}$ &
            $+t_{2}$ & $-t_{7}$ & $+t_{1}$ & $-t_{2}$ & $-t_{1}$ & $+t_{7}$ \\
\end{tabular}

\caption{The schedule for $n=8$ teams, where the horizontal ordinate represents the teams, the ordinate represents the days,
`$+$' means the team on the corresponding horizontal ordinate plays at its home, and  `$-$' means the team on the corresponding horizontal ordinate plays at the opposite team's home}
\label{ansexample}
\end{table}

From Table~\ref{ansexample}, we can see that on each line there are at most two consecutive `$+$'/`$-$', and then we can see that this is a feasible schedule.

\begin{theorem}
For TTP-$2$ with $n$ teams such that $n\equiv 0 \pmod 4$, the above construction can generate a feasible schedule.
\end{theorem}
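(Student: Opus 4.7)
The plan is to verify the five feasibility properties of Section~\ref{sec_pre} in turn: \emph{Fixed-game-value}, \emph{Fixed-game-time}, \emph{Direct-traveling}, \emph{No-repeat}, and \emph{Bounded-by-}$2$. \emph{Direct-traveling} is definitional once each team's daily venue is fixed, so I would concentrate on the other four, splitting the argument into a super-team layer and a normal-team layer. At the super-team layer the construction fixes $u_m$ and rotates $u_1,\dots,u_{m-1}$ one position clockwise per time slot; this is the standard polygon (circle-method) schedule of a single round-robin on $m$ super-teams, so every pair of super-teams meets exactly once among the $m-1$ time slots. At the normal-team layer I would examine the three extension templates (Figures~\ref{fig:fig003}, \ref{fig:fig004}, and~\ref{fig:fig006}) and argue that they lift the super-team round-robin to a valid double round-robin on normal teams.

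For \emph{Fixed-game-value}, each normal or left super-game between $u_i$ and $u_j$ realizes the four inter-pair matchings $\{t_{2i-1},t_{2j-1}\}$, $\{t_{2i-1},t_{2j}\}$, $\{t_{2i},t_{2j-1}\}$, $\{t_{2i},t_{2j}\}$ twice each, once at each home, which accounts for its eight normal games; the last super-game template additionally carries the four intra-pair games $\{t_{2i-1},t_{2i}\}$ and $\{t_{2j-1},t_{2j}\}$, giving the remaining twelve. Summed over the $m-1$ slots, every unordered pair of teams appears exactly once with each orientation. For \emph{Fixed-game-time}, the first $m-2$ time slots contribute $4(m-2)$ days and the last slot six more, for a total of $4m-2=2n-2$ days, and each template schedules exactly one game per team per day of its block.

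For \emph{No-repeat} and \emph{Bounded-by-}$2$, I would split the check into block interiors and block boundaries. Inside each four-day block, the templates show at a glance that no opponent of any team repeats on consecutive days and that the home/away string of each team has shape $--++$ or its reverse; inside the six-day last block, Figure~\ref{fig:fig006} gives a similar string with no run of three equal symbols. At a boundary, the super-team rotation guarantees that the opponents of any $t_i$ on day $4q$ and on day $4q+1$ belong to different super-teams, so no normal-team opponent can repeat; the parity convention, which reverses every non-left edge each time slot and chooses the orientation of the left template by the parity of $q$, is designed so that the final home/away symbol of one block is the opposite of the initial symbol of the next, preventing any run of three equal home/away symbols across a boundary.

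The main obstacle I anticipate is the boundary analysis around the \emph{left} template and the six-day last block. One must confirm that when super-team $u_m$, whose position never rotates, appears in a left super-game on one time slot and in another on the next, the forced reorientations in Figure~\ref{fig:fig004} do not produce three consecutive homes or aways for the members of $u_m$ or for its opponent; and that the six-day template of Figure~\ref{fig:fig006} glues properly to the four-day template ending on day $4(m-2)$. Once these finitely many boundary patterns are checked, the schedule of Table~\ref{ansexample} serves as a concrete base case exhibiting every transition type (normal/normal, normal/left, and left/last), and the general statement follows by the same block-by-block verification.
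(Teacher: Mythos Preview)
Your plan is essentially the paper's own proof: both verify the five properties, treat \emph{Direct-traveling} as definitional, and spend the real work on \emph{Bounded-by-}$2$ via a block-by-block $H/A$ pattern analysis that separates block interiors from block boundaries and singles out the left and last super-games as the delicate cases. Two small corrections to your sketch: inside a \emph{left} super-game the four-day string is $AHHA$ or $HAAH$ (not $--++$ or its reverse), so the boundary checks around a left block---and in particular for the two teams in $u_m$, which after slot~$1$ sit in a left super-game every time---require their own case split, exactly as you anticipated; and the paper disposes of the six-day tail not by invoking Table~\ref{ansexample} as a base case but by explicitly listing the last ten games for each of the finitely many team types (Figure~\ref{fig:fig007}).
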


\begin{proof}
According to the definition of feasible schedules, we only need to prove the five properties in the definition.

The first two properties of fixed-game-value and fixed-game-time are easy to see from the construction.
Each super-game on the first $m-2$ time slot will be extended to eight normal games on four days and each team participates in four games on four days. Each super-game on the last time slot will be extended to twelve normal games on six days and each team participates in six games on six days. So each team plays $2(n-1)$ games on $2(n-1)$ different days. It is also easy to see from the construction that each team pair plays exactly two games, one at the home venue of each team.
 We also assume that the itinerary obeys the direct-traveling property. It does not need to prove.

Next, we focus on the bounded-by-$2$ property. We will use `$H$' and `$A$' to denote a home game and an away game, respectively. We will also let $\overline{H}=A$ and $\overline{A}=H$.

We first look at the games in the first $2n-8$ days. For the two teams in $u_m$, the 4 games in the first time slot will be $HHAA$, in an even time slot will be $HAAH$ (see Figure~7, and the 4 games in an odd time slot (not including the first time slot) will be $AHHA$. So two consecutive time slots can be jointed well.
Next, we consider a team $t_i$ in $u_j$ $(j\in \{1,2,\dots,m-1\})$.
In the time slots for normal super-games, the 4 games for $t_i$ will be $AAHH$ if the arc between the two super-teams is out of $u_j$ and $\overline{AAHH}$ otherwise. In the time slots for left super-games, the 4 games will be $AHHA$ or $\overline{AHHA}$.
In the time slot after a left super-game, the 4 games will be $HHAA$ or $\overline{HHAA}$.
In the time slot before a left super-game, the 4 games will be $AAHH$ or $\overline{AAHH}$.
So two consecutive time slots can joint well, no matter they are two time slots for normal super-games, or one time slot for a normal super-game and one time slot for a left super-game.

Only the last 6 days on the last time slot have not been analyzed yet. For the sake of presentation,
we simply list out all the games on the last two time slots for each team.
There are 10 games for each team, 4 on the second to last time slot and 6 on the last time slot.
Let $A\in \{3,\dots, m-1\}$ and $H\in \{4,6,\dots, m-2\}\cup\{1\}$ such that super-team $u_A$ plays $AAHH$ and $u_H$ plays $HHAA$ on the penultimate time slot. Note that the forms of super-teams $u_2$ and $u_m$ are different because they play against each other in a left super-game on the penultimate time slot.
We also denote teams $\{t_{2i-1},t_{2i}\}$ in super-team $u_i$ by $\{t_{i_1},t_{i_2}\}$.
The last 10 games for all the teams are shown in Figure~10.
We can see that there are no three consecutive home games or away games. So the \emph{bounded-by-k} property holds.

All the properties are satisfied and then the schedule is a feasible schedule for TTP-2.
\end{proof}

\begin{figure}[h]
    \centering
    \includegraphics[scale=0.7]{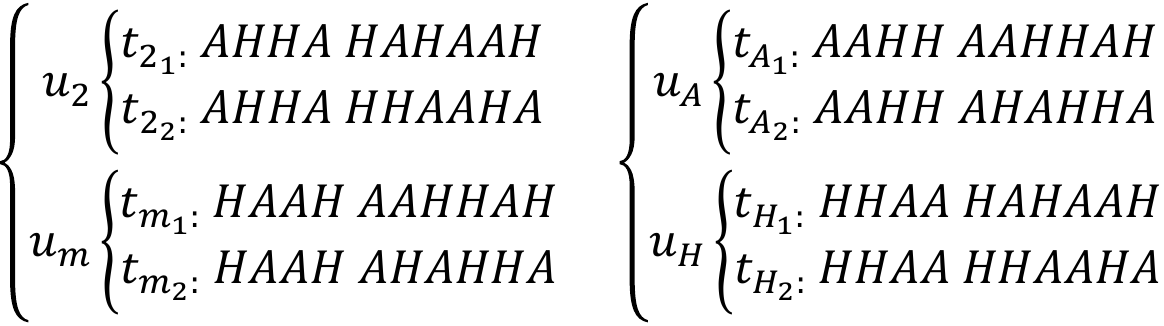}
    \caption{The last 10 games for the case of $n\equiv 0 \pmod 4$}
    \label{fig:fig007}
\end{figure}

We have introduced a method to construct a feasible schedule. However, it is not our final schedule. We may still need to specify the order of some teams or super-teams to minimize the extra cost. We will introduce this when we analyze the approximation ratio.

\section{Approximation Quality of the Schedule}
\subsection{Analysis of the approximation ratio}
To show the quality of our schedule, we compare it with the independent lower bound. We will check the difference between our itinerary of each team $t_i$ and the optimal itinerary of $t_i$ and compute the extra cost.
As mentioned in the last paragraph of Section~\ref{sec_pre}, we will compare some sub itineraries of a team.
We will look at the sub itinerary of a team on the four or six days in a super-game, which is coincident with a sub itinerary of the optimal itinerary:
all teams stay at home before the first game in a super-game and return home after the last game in the super-game.
In our algorithm, there are three types of super-games: normal super-games, left super-games, and last super-games. We analyze the total extra cost of all normal teams caused by each type of super-games.

\begin{lemma}\label{extra}
Assume there is a super-game between super-teams $u_i$ and $u_j$ in our schedule.
\begin{enumerate}
\item [(a)] If the super-game is a normal super-game, then the extra cost of all normal teams in $u_i$ and $u_j$ is 0;
\item [(b)] If the super-game is a left or last super-game, then the extra cost of all normal teams in $u_i$ and $u_j$ is at most $D(u_i,u_j)$.
\end{enumerate}
\end{lemma}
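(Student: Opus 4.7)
The plan is to prove the lemma by case analysis on the three super-game types, using the explicit schedules in Figures~\ref{fig:fig003}, \ref{fig:fig004}, and~\ref{fig:fig006}. For each of the four normal teams in $u_i\cup u_j$, I would trace the road trip(s) taken during the super-game's day(s) and compare the cost to the coincident optimal sub-itinerary. Since $u_i$ and $u_j$ are each matching pairs of $M$, the optimal sub-itinerary for a team on the opposite side consists of a single triangle trip through both opposing teams (using the opposing matching edge), plus, in the last super-game, an additional two-day trip to the teammate.

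For part~(a), the normal super-game of Figure~\ref{fig:fig003} schedules each team's two away games consecutively at the two teams of the opposing super-team. The associated road trip therefore traverses the edge between those two opposing teams, which is exactly the matching edge of $M$ inside that super-team, so the trip coincides with the optimal-itinerary trip. Hence each of the four teams contributes $0$ extra cost, giving a total of $0$.

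For part~(b), I would inspect Figures~\ref{fig:fig004} and~\ref{fig:fig006} to see how the trip structure deviates from the optimal pattern. In a left super-game, I expect the side not containing $u_m$ to be forced to split its visit to the opposite super-team into two separate single-team trips rather than one triangle, while the $u_m$ side still executes the optimal triangle. In a last super-game, the two internal teammate games on each side break the six-day sub-itinerary into a combination of one triangle trip and one or two single-team trips, where the triangle may now pass through a non-matching pair of opposing teams. In both sub-cases, I would write out the actual trip cost and the optimal trip cost, apply the triangle inequality to bound each surviving positive term of the difference by an edge between $u_i$ and $u_j$, and sum across the four teams; non-positive terms involving the internal matching edges $D_{2i-1,2i}$ or $D_{2j-1,2j}$ are dropped, yielding the bound $D(u_i, u_j) = D_{2i-1,2j-1} + D_{2i-1,2j} + D_{2i,2j-1} + D_{2i,2j}$. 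The main obstacle is the bookkeeping for the last super-game, where the two super-teams play asymmetric roles in Figure~\ref{fig:fig006}: one must enumerate all trips carefully and confirm that each positive deviation can be charged to a distinct edge of $D(u_i, u_j)$, and the $n=8$ schedule of Table~\ref{ansexample} offers a useful sanity check.
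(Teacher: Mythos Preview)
Your overall approach---case analysis on the three super-game types, tracing the actual road trips against the optimal triangle-plus-matching-edge pattern, and bounding the difference by triangle inequality---is exactly the paper's method, and part~(a) is handled correctly.

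However, your expectations for part~(b) are inverted relative to what the schedules in Figures~\ref{fig:fig004} and~\ref{fig:fig006} actually do. In a left super-game the paper shows that the teams in $u_m$ play the pattern $AHHA$ (two isolated away trips) while the teams on the \emph{other} side play $HAAH$ (a single optimal triangle); thus it is the $u_m$ side, not the side ``not containing $u_m$'', that incurs the extra cost
\[
D_{n-1,2j-1}+D_{n-1,2j}+D_{n,2j-1}+D_{n,2j}-2D_{2j-1,2j}\le D(u_m,u_j).
\]
Similarly, in a last super-game the roles are asymmetric but not in the way you describe: the teams in $u_j$ achieve their optimal sub-itinerary with zero extra cost, and the entire deviation is concentrated on the $u_i$ side, namely
\[
2D_{2i-1,2j}+D_{2i,2j-1}+D_{2i,2j}-D_{2i-1,2i}-2D_{2j-1,2j}\le D(u_i,u_j).
\]
So you will not need to ``charge each positive deviation to a distinct edge of $D(u_i,u_j)$'' across four teams; only two teams contribute, and a single application of the triangle inequality suffices in each sub-case. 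Once you actually read off the day-by-day patterns from the figures you would discover and correct these reversals, so the plan is sound, but be aware that the bookkeeping is simpler than you anticipate and falls on the opposite super-team from the one you guessed.
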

\begin{proof}
From Figure 6 we can see that in a normal super-game any of the four normal teams will visit the home venue of the two normal teams in the opposite super-team in one road trip. So they have the same road trip as that in their optimal itineraries.
The extra cost is 0. So (a) holds.

Next, we assume that the super-game is a left super-game and $u_i=u_m$. From Figure~\ref{fig:fig004}, we can see that the two teams $t_{n-1}$ and $t_{n}$ in the super-team $u_m$ play $AHHA$ in the four days (Recall that $A$ means an away game and $H$ means a home game), and the two teams 
in the super-team $u_j$ play $HAAH$.
The two teams in $u_j$ will have the same road trip as that in the optimal itinerary and then the extra cost is 0.
We compare the road trips of the two teams $t_{n-1}$ and $t_{n}$ with their optimal road trips together. The difference is that
$$D_{n-1,2j-1}+D_{n-1,2j}+D_{n,2j-1}+D_{n,2j}-2D_{2j-1,2j}\leq D(u_m,u_j),$$
 by the triangle inequality.

Last, we consider it as a last super-game. We assume with loss of generality that on the first day of the six days, the games are held at the home venue of teams in $u_i$.
From Figure ~\ref{fig:fig006}, we can see that teams in $u_j$ do not have any extra cost and teams in $u_i$ have a total extra cost of
 $$2D_{2i-1,2j}+D_{2i,2j-1}+D_{2i,2j}-D_{2i-1,2i}-2D_{2j-1,2j}\leq D(u_i,u_j),$$
 by the triangle inequality.
\end{proof}

In our schedule, there are $\frac{m}{2}+(m-3)(\frac{m}{2}-1)$ normal super-games, which contribute 0  to the extra cost.
There are $m-3$ left super-games on the $m-3$ middle time slots. By Lemma~\ref{extra}, we know that the total extra cost is $E_1=\sum_{i=2}^{m-2}D(u_m,u_i)$.
There are $\frac{m}{2}$ last super-games on the last time slot. By Lemma~\ref{extra}, we know that the total extra cost is  $E_2=\sum_{i=1}^{m/2}D(u_i,u_{m+1-i})$.
\begin{lemma}
The total extra cost of our schedule is at most
\[ E_1+E_2= \sum_{i=2}^{m-2}D(u_m,u_i)+\sum_{i=1}^{m/2}D(u_i,u_{m+1-i}).\]
\end{lemma}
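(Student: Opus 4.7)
The plan is to directly aggregate the per-super-game bounds from Lemma~\ref{extra} over the three classes of super-games produced by the construction. Since the four or six days of a super-game form a sub itinerary that is coincident with the corresponding portion of every team's optimal itinerary (each team visits exactly the two teams of the opposing super-team and then returns home), the total extra cost of the whole schedule is just the sum of the extra costs over all super-games. Thus the proof will reduce to (i) classifying each super-game in our construction as normal, left, or last, and (ii) reading off, for each left and last super-game, the identities of the two super-teams involved.

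First I would invoke Lemma~\ref{extra}(a) on the normal super-games: every one of the $\frac{m}{2}+(m-3)(\frac{m}{2}-1)$ normal super-games contributes $0$ to the extra cost, so they can be discarded from the sum. What remains is the contribution from left super-games (on the middle time slots $2,3,\dots,m-2$) and from last super-games (on time slot $m-1$), each of which is bounded by $D(u_i,u_j)$ for its two participating super-teams by Lemma~\ref{extra}(b). The total extra cost is therefore at most
\[
\sum_{\text{left super-games}} D(u_i,u_j) \;+\; \sum_{\text{last super-games}} D(u_i,u_j).
\]

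Next I would identify the participants. For the left super-games: in the construction, $u_m$ is fixed while the white nodes rotate clockwise by one position between consecutive time slots, and the edge incident on $u_m$ is the unique double-arrow (left) edge on each middle time slot. Over the $m-3$ middle time slots $u_m$ therefore meets each of $u_2,u_3,\dots,u_{m-2}$ exactly once, which gives
\[
E_1=\sum_{i=2}^{m-2} D(u_m,u_i).
\]
For the last super-games on time slot $m-1$, the construction explicitly pairs $u_i$ with $u_{m+1-i}$ for $i=1,\dots,m/2$ (this is the matching depicted in Figure~\ref{fig:fig005}), yielding
\[
E_2=\sum_{i=1}^{m/2} D(u_i,u_{m+1-i}).
\]
Adding these two bounds proves the stated inequality.

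The routine part is the bookkeeping; the only point that needs care is the first one, namely confirming that the rotation scheme really enumerates the opponents of $u_m$ in the middle time slots as exactly $\{u_2,\dots,u_{m-2}\}$ (i.e.\ that the first and last time slots account for the two missing opponents $u_1$ and $u_{m-1}$, or the analogous pair depending on the initial configuration shown in Figure~\ref{fig:figa}). This can be verified by following the clockwise rotation for one full cycle; once that is done, everything else is a direct application of Lemma~\ref{extra}.
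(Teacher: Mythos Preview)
Your proposal is correct and follows essentially the same route as the paper: you sum the per-super-game bounds of Lemma~\ref{extra} over normal, left, and last super-games, note that normal super-games contribute $0$, and identify the opponents of $u_m$ in the middle $m-3$ slots as $u_2,\dots,u_{m-2}$ and the last-slot pairings as $u_i\leftrightarrow u_{m+1-i}$. The paper does exactly this (and, like you, leaves the verification of the rotation pattern implicit).
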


Next, we will make $E_1$ and $E_2$ as small as possible by reordering the teams.

First, we consider $E_2$. The extra cost is the sum of the weight of edges $\{u_iu_{m+1-i}\}_{i=1}^{m/2}$ in $H$, which form a matching in $H$.
Our algorithm is to reorder $u_i$ such that $\{u_iu_{m+1-i}\}_{i=1}^{m/2}$ is a minimum perfect matching in $H$.
Note that $H$ is a complete graph on $m$ (even) vertices and then we can use $O(m^3)$ time
algorithm to find the minimum perfect matching $M_H$ in $H$.
Our algorithm will reorder $u_i$ such that $\{u_iu_{m+1-i}\}_{i=1}^{m/2}=M_H$. For the cost of $M_H$, we have that

\begin{eqnarray} \label{M_H}
E_2=D_{M_H}\leq \frac{1}{m-1}D_H.
\end{eqnarray}

Second, we consider $E_1$. Our idea is to choose $u_m$ such that $\sum_{i=2}^{m-2}D(u_m,u_i)$ is minimized.
Note that once $u_m$ is determined, super-team $u_1$ is also determined by the matching $M_H$ ($u_mu_1$ should be an edge in $M_H$).
After determining $u_m$ and $u_1$ together, we sill need to decider $u_{m-1}$.
We first let $u_m$ be the super-team such that $\sum_{i=2}^{m-1}D(u_m,u_i)$ is minimized (There are $m$ possible candidates for $u_m$). Thus, we have that
\[\sum_{i=2}^{m-1}D(u_m,u_i)\leq
\frac{2(D_H-D_{M_H})}{m}.
\]
Then we let $u_{m-1}$ be the super-team such that $D(u_m,u_{m-1})\geq D(u_m,u_i)$ for all $2\leq i \leq m-2$. Thus, we have that
\begin{eqnarray} \label{L}
E_1 = \sum_{i=2}^{m-2}D(u_m,u_i)\leq \sum_{i=2}^{m-1}D(u_m,u_i)\frac{m-3}{m-2}\leq \frac{2(m-3)(D_H-D_{M_H})}{m(m-2)}.
\end{eqnarray}

By (\ref{eqn_GH}), (\ref{eqn_lowerbound}), (\ref{M_H}) and (\ref{L}), we know that the total extra cost of our schedule is
\begin{eqnarray} \label{Even}
\begin{array}{*{20}l}
E_1+E_2&\leq& D_{M_H}+\frac{2(m-3)(D_H-D_{M_H})}{m(m-2)}\\
&=&(1-\frac{3}{m}+\frac{1}{m-2})D_{M_H}+(\frac{3}{m}-\frac{1}{m-2})D_H\\
&\leq&(\frac{3}{m}-\frac{3}{m(m-1)})D_H\\
&\leq&(\frac{3}{2m}-\frac{3}{2m(m-1)})LB=(\frac{3}{n}-\frac{6}{n(n-2)})LB.
\end{array}
\end{eqnarray}

Next, we analyze the running-time bound of our algorithm.
Our algorithm first uses $O(n^3)$ time to compute the minimum perfect matching $M$ and the minimum perfect matching $M_H$. It takes $O(n^2)$ time for us to determine $u_m$ and $u_{m-1}$ such that (\ref{M_H}) and (\ref{L}) hold and the remaining construction of the schedule also use $O(n^2)$ time. Thus, our algorithm runs in $O(n^3)$ time.

\begin{theorem}\label{result1}
For TTP-2 on $n$ teams where $n\geq 8$ and $n\equiv 0 \pmod 4$, a feasible schedule can be computed in $O(n^3)$ time such that the total traveling distance is at most $(1+{\frac{3}{n}}-{\frac{6}{n(n-2)}})$ times of the independent lower bound.
\end{theorem}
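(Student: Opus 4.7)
The plan is to assemble three ingredients, all set up in the preceding subsections, into one inequality that compares the schedule cost with the independent lower bound $LB = 2D_G + nD_M$: (i) feasibility of the construction in Section~3, (ii) the per-super-game extra-cost bound from Lemma~\ref{extra}, and (iii) an optimization over how the super-teams are labeled.

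First I would invoke the feasibility theorem established for the construction, so that only the traveling distance needs to be analyzed. Writing the total distance as $LB$ plus the aggregate extra cost, I would classify the super-games. Our schedule contains $\frac{m}{2} + (m-3)(\frac{m}{2}-1)$ normal super-games, $m-3$ left super-games on the middle time slots, and $\frac{m}{2}$ last super-games on the final time slot. Applying Lemma~\ref{extra} term by term, normal super-games contribute $0$, each left super-game between $u_m$ and some $u_i$ contributes at most $D(u_m,u_i)$, and each last super-game between $u_i$ and $u_{m+1-i}$ contributes at most $D(u_i,u_{m+1-i})$. So the total extra cost is at most $E_1 + E_2$ with $E_1 = \sum_{i=2}^{m-2}D(u_m,u_i)$ and $E_2 = \sum_{i=1}^{m/2}D(u_i,u_{m+1-i})$, and these quantities depend only on the labeling of the super-teams.

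Next I would exploit this labeling freedom in two stages. For $E_2$, I would compute a minimum weight perfect matching $M_H$ in the super-team graph $H$ and relabel so that $\{u_iu_{m+1-i}\}_{i=1}^{m/2} = M_H$; an averaging bound over perfect matchings in the complete graph $H$ then gives $E_2 = D_{M_H} \le D_H/(m-1)$. For $E_1$, I would (consistently with the chosen $M_H$) pick $u_m$ among the $m$ candidates so as to minimize $\sum_{i\neq m} D(u_m, u_i)$, which by averaging is at most $2(D_H - D_{M_H})/m$, and then pick $u_{m-1}$ to be the heaviest remaining $H$-neighbor of $u_m$, so that dropping the $D(u_m,u_{m-1})$ term loses at most a factor of $\frac{m-3}{m-2}$. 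This yields $E_1 \le \frac{2(m-3)(D_H - D_{M_H})}{m(m-2)}$.

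Finally I would combine the two bounds: $E_1 + E_2$ is a convex combination of $D_{M_H}$ and $D_H$, and substituting $D_{M_H} \le D_H/(m-1)$ reduces everything to a multiple of $D_H$. Using $D_H = D_G - D_M$ together with $LB = 2D_G + nD_M \ge 2D_H$, the multiple becomes a multiple of $LB$, which after simplification in terms of $m = n/2$ collapses to $(\tfrac{3}{n} - \tfrac{6}{n(n-2)})LB$. The main obstacle is the algebraic bookkeeping in this final step; the rest is already essentially packaged by the lemma. For the running time, computing $M$ and $M_H$ each costs $O(n^3)$ via standard weighted matching, while selecting $u_m$ and $u_{m-1}$ and expanding super-games into normal games are $O(n^2)$, giving the stated $O(n^3)$ bound.
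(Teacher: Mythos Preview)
Your plan is correct and matches the paper's own argument essentially step for step: feasibility from Section~3, the extra-cost decomposition $E_1+E_2$ via Lemma~\ref{extra}, bounding $E_2$ by a minimum perfect matching in $H$, bounding $E_1$ by an averaging choice of $u_m$ (and then $u_{m-1}$), and the final algebra against $LB\ge 2D_H$. The only slip is notational: in your averaging step for $E_1$ the relevant sum must exclude the $M_H$-partner $u_1$ of $u_m$ (not just $u_m$ itself) to obtain $2(D_H-D_{M_H})/m$, which is clearly what you intend given your parenthetical and the stated bound.
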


\subsection{The tightness of the analysis}
Next, we show that the analysis of our algorithm is tight, i.e., the approximation ratio in Theorem~\ref{result1} is the best for our algorithm. We show an example that the ratio can be reached.

In the example, the distance of each edge in the minimum perfect matching $M$ is 0 and the distance of any other edge in $G$ is 1.
We can see that the triangle inequality property still holds.
Let $E(G)$ denote the edge set of graph $G$. By (\ref{eqn_lowerbound}), we know that the independent lower bound of this instance is
\begin{eqnarray}\label{worst_case}
2D_G+nd(M)=2(\left|E(G)\right|-\left|M\right|)=n(n-2).
\end{eqnarray}

In this instance, the extra costs of a normal super-game, left super-game and last super-game are 0, 4 and 4, respectively.
In our schedule, there are $m-3$ left super-games and $\frac{m}{2}$ last super-games in total.
Thus, the total extra cost of our schedule is $4\times(m-3)+4\times\frac{m}{2}=3n-12$. Thus, the ratio is
\begin{eqnarray}\label{worst_ratio}
1+\frac{3n-12}{n(n-2)}=1+\frac{3}{n}-\frac{6}{n(n-2)}.
\end{eqnarray}

This example only shows the ratio is tight for this algorithm. However, it is still possible that some other algorithms can achieve a better ratio.

\section{Experimental Results}
To test the performance of our schedule algorithm, we will implement it on well-known benchmark instances.
The above construction method can guarantee a good approximation ratio. However, for experimentations, we may still be able to get further improvements by some heuristic methods.
For experiments, we will also use some simple heuristic methods to get further improvements.

\subsection{Heuristics based on Local Search}
Above we have introduced a method to construct a feasible schedule for TTP-2. Based on one feasible schedule,
we may be able to get many feasible schedules by just changing the ordering of the teams.
There are total $n!$ permutations of the $n$ teams, each of which may lead to a feasible schedule.
In the above analysis, we choose the permutation such that we can get a good approximation ratio.
This is just for the purpose of the analysis. We do not guarantee this permutation is optimal.
Other permutations may lead to better results on each concrete instance.
However, the number of permutations is exponential and it is not effective to check all of them.
If we check all permutations, the running-time bound will increase a factor of $n!$, which is not polynomial and not effective.
Our idea is to only consider the permutations obtained by swapping the indexes of two super-teams and by swapping the indexes of the two teams in the same super-team.
First, to check all possible swapping between two super-teams, we will have $O(m^2)$ loops, and the running-time bound will increase a factor of $m^2$.
Second, for each last super-game between two super-teams, we consider the two orders of the two teams in each super-team and then we get four cases.
We directly compute the extra cost for the four cases and select the best one. There are $m/2$ last super-games and then we only have $O(m)$ additional time.
Note that we do not apply the second swapping for normal and left super-games since this operation will not get any improvement on them (this can be seen from the proof of Lemma~\ref{extra}).


\subsection{Applications to Benchmark Sets}
Our tested benchmark comes from~\cite{trick2007challenge}, where introduces 62 instances and most of them are instances from the real world. There are 34 instances of $n$ teams with $n\geq 4$ and $n\equiv 0 \pmod 4$. Half of them are very small ($n\leq 8$) or very special (all teams are in a cycle or the distance between any two teams is 1) and they were not tested in previous papers.
So we only test the remaining 17 instances.
The results are shown in
 Table~\ref{experimentresult}, where the column `\emph{ILB Values}' indicates the independent lower bounds, `\emph{Previous Results}' lists previously known results in~\cite{DBLP:conf/mfcs/XiaoK16}, `\emph{Before Swapping}' is the results obtained by our schedule algorithm without using the local search method of swapping, `\emph{After Swapping}' shows the results after swapping, `\emph{Our Gap}' is defined to be $\frac{After Swapping~-~ILB~Values}{ILB~Values}$ and `\emph{Improvement Ratio}' is defined as $\frac{Previous~Results~-~After Swapping}{Previous~Results}$.

\begin{table}[ht]
\centering
\begin{tabular}{ m{1.4cm}<{\centering} *{1}{|m{1.2cm}<{\centering}} *{1}{m{1.3cm}<{\centering}} *{1}{m{1.4cm}<{\centering}}*{1}{m{1.4cm}<{\centering}}*{1}{m{1.2cm}<{\centering}}*{1}{m{1.9cm}<{\centering}}  }
\hline
  Data & ILB & Previous & Before & After &Our & Improvement \\
   Set & Values &  Results & Swapping & Swapping &Gap(\%) & Ratio(\%)\\
\hline
\vspace{+1mm}
  Galaxy40 & 298484 & 307469 & 306230 & 305714 & 2.42 & 0.57 \\
  Galaxy36 & 205280 & 212821 & 211382 & 210845 & 2.71 & 0.93 \\
  Galaxy32 & 139922 & 145445 & 144173 & 144050 & 2.95 & 0.96 \\
  Galaxy28 & 89242  & 93235  & 92408  & 92291  & 3.42 & 1.01 \\
  Galaxy24 & 53282  & 55883  & 55486  & 55418  & 4.01 & 0.83 \\
  Galaxy20 & 30508  & 32530  & 32082  & 32067  & 5.11 & 1.42 \\
  Galaxy16 & 17562  & 19040  & 18614  & 18599  & 5.90 & 2.32 \\
  Galaxy12 & 8374   & 9490   & 9108   & 9045   & 8.01 & 4.69 \\
  NFL32    & 1162798& 1211239& 1199619& 1198091& 3.04 & 1.09 \\
  NFL28    & 771442 & 810310 & 798208 & 798168 & 3.46 & 1.50 \\
  NFL24    & 573618 & 611441 & 598437 & 596872 & 4.05 & 2.38 \\
  NFL20    & 423958 & 456563 & 444426 & 442950 & 4.48 & 2.98 \\
  NFL16    & 294866 & 321357 & 310416 & 309580 & 4.99 & 3.66 \\
  NL16     & 334940 & 359720 & 351647 & 350727 & 4.71 & 2.50 \\
  NL12     & 132720 & 144744 & 140686 & 140686 & 6.00 & 2.80 \\
  Super12  & 551580 & 612583 & 590773 & 587387 & 6.49 & 4.11 \\
  Brazil24 & 620574 & 655235 & 643783 & 642530 & 3.54 & 1.94 \\
\end{tabular}
\caption{Experimental results on the 17 instances with $n$ teams ($n$ is divisible by 4)}
\label{experimentresult}
\end{table}

From Table~\ref{experimentresult}, we can see that our schedule algorithm can improve all the 17 instances with an average improvement of $2.10\%$. In these tested instances, the number of teams is at most 40. So our algorithm runs very fast.
On a standard laptop with a 2.30GHz Intel(R) Core(TM) i5-6200 CPU and 8 gigabytes of memory, all the 17 instances can be solved together within 0.1 seconds before applying the local search and within 8 seconds including local search.

\section{Conclusion}
In this paper, we introduce a new schedule for TTP-2 with $n\equiv 0 \pmod 4$ and prove an approximation ratio of   $(1+{\frac{3}{n}}-{\frac{6}{n(n-2)}})$, improving the previous ratio of $(1+\frac{4}{n}+\frac{4}{n(n-2)})$ in~\cite{DBLP:conf/mfcs/XiaoK16}.
The improvement looks small. However, the ratio is quite close to 1 now and further improvements become harder and harder.
Furthermore, the new construction method is simpler and more intuitive, compared with the previous method in~\cite{DBLP:conf/mfcs/XiaoK16}.
Experiments also show that the new schedule improves the results on all tested instances in the benchmark~\cite{trick2007challenge}.
In the analysis, we can see that the extra cost of our schedule is contributed by left and last super-games.
So we can decompose the analysis of the whole schedule into the analysis of left and last super-games.
To get further improvements, we only need to reduce the number of left and last super-games.

\end{document}